\documentclass[10pt,final,conference]{IEEEtran}

\usepackage{siunitx}       %
\usepackage{amsmath}    %
\usepackage{amssymb}                %
\usepackage{amsthm}                 %

  \usepackage{balance}

  \usepackage[font=footnotesize,hypcap=true]{caption}
  \usepackage[font=footnotesize,hypcap=true]{subcaption}
\usepackage{graphicx}
\usepackage[hyperref,x11names,cmyk,pdftex,table]{xcolor}
\usepackage{tikz}            %
\usetikzlibrary{arrows}

\usepackage{booktabs}
\usepackage{multirow}
\usepackage{dcolumn}
\usepackage{arydshln} %

\usepackage[]{algorithm2e}

\usepackage{enumerate}
\usepackage{paralist}

\usepackage[resetfonts]{cmap}
\usepackage[T1]{fontenc}
\usepackage[utf8]{inputenc}

  \usepackage[final,layout={inline},author=]{fixme}

  \usepackage[pdftex,breaklinks=true,bookmarks=false]{hyperref}
  \usepackage{cite}

\usepackage{cleveref}

\title{Fast Graphlet Transform of Sparse Graphs}

\makeatletter
\let\titletext\@title
\makeatother

  \author{%
    \IEEEauthorblockN{%
      Dimitris~Floros\IEEEauthorrefmark{1}                     \qquad
      Nikos~Pitsianis\IEEEauthorrefmark{1}\IEEEauthorrefmark{2} \qquad
      Xiaobai~Sun\IEEEauthorrefmark{2}}
    \\
    \IEEEauthorblockA{\small%
      \begin{tabular}{c @{\qquad\qquad} c}
        \IEEEauthorrefmark{1}%
        Department~of~Electrical~and~Computer~Engineering
        &
        \IEEEauthorrefmark{2}%
        Department~of~Computer~Science
        \\
        Aristotle~University~of~Thessaloniki
        &
        Duke~University
        \\
        Thessaloniki~54124,~Greece
        &
        Durham,~NC~27708,~USA
      \end{tabular}%
    }%
  }

  \date{}

\newcommand{\pdfauthors}{%
  D. Floros, N. Pitsianis, X. Sun}

\graphicspath{%
  {./figures/}%
}

\hypersetup{%
  pdffitwindow=false,%
  pdfstartview={FitH},%
  colorlinks,%
  pdfauthor={\pdfauthors},%
  pdftitle={\titletext},%
  citecolor=PaleGreen4,%
  filecolor=DarkOrchid4,%
  linkcolor=OrangeRed4,%
  urlcolor=black%
}

\makeatletter
\def\th@plain{%
  \thm@notefont{}%
  \itshape %
}
\def\th@definition{%
  \thm@notefont{}%
  \normalfont %
}
\makeatother

\newtheorem{theorem}{Theorem}

\newtheorem{lemma}{Lemma}

\setkeys{Gin}{draft=false}

\pdfpageattr{/Group <</S /Transparency /I true /CS /DeviceRGB>>}

\let\leftorig\left
\let\rightorig\right
\renewcommand{\left}{\mathopen{}\mathclose\bgroup\leftorig}
\renewcommand{\right}{\aftergroup\egroup\rightorig}

\crefname{equation}{}{}

\crefname{figure}{Fig.}{Fig.}
\Crefname{figure}{Fig.}{Fig.}

\FXRegisterAuthor{df}{edf}{\color{blue}[DF]}
\FXRegisterAuthor{tl}{etl}{\color{cyan}[TL]}
\FXRegisterAuthor{np}{enp}{\color{magenta}[NP]}
\FXRegisterAuthor{xs}{exs}{\color{olive}[XS]}

\hypersetup{final}

\DeclareSIUnit{\nothing}{\relax}

\begin{document}

  \bstctlcite{IEEEexample:BSTcontrol}

  \maketitle

\addcontentsline{toc}{section}{Abstract}
\begin{abstract}
We introduce the computational problem of graphlet transform of a
sparse graph. Graphlets are fundamental topology elements of all
graphs/networks. They can be used as coding elements to encode
graph-topological information at multiple granularity levels, for
classifying vertices on the same graph/network, as well as, for making
differentiation or connection across different networks.
Network/graph analysis using graphlets has growing applications. We
recognize the universality and increased encoding capacity in using
multiple graphlets, we address the arising computational complexity
issues, and we present a fast method for exact graphlet transform. The
fast graphlet transform establishes a few remarkable records at once
in high computational efficiency, low memory consumption, and ready
translation to high-performance program and implementation. It is
intended to enable and advance network/graph analysis with graphlets,
and to introduce the relatively new analysis apparatus to graph
theory, high-performance graph computation, and broader applications.

 \end{abstract}

  \begin{IEEEkeywords}
network analysis, topological encoding, fast graphlet transform
   \end{IEEEkeywords}

\section{Introduction}
\label{sec:introduction}

\begin{figure}
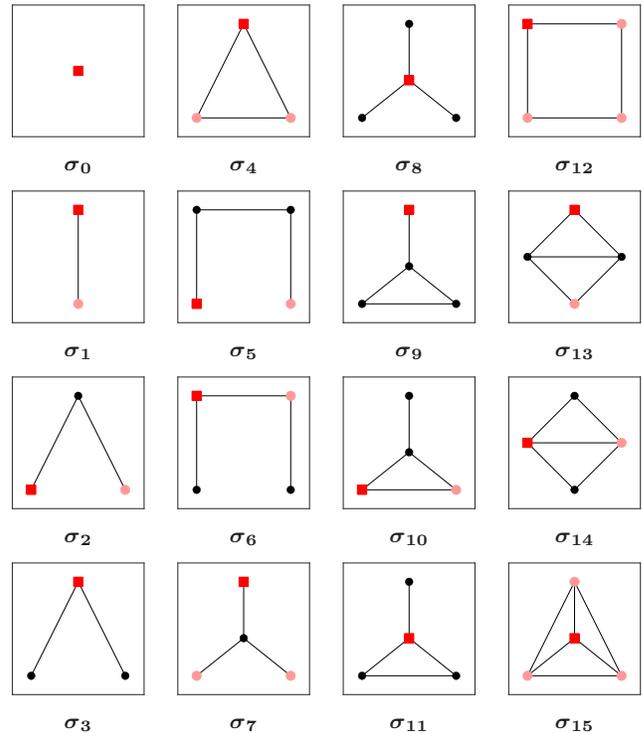

  \centering
\newcommand{\figGraphletWidth}{0.2\linewidth}
\newcommand{\graphletID}{0}
\hspace*{\fill}
\renewcommand{\graphletID}{0}
\begin{subfigure}{\figGraphletWidth}
  \includegraphics[width=\linewidth]{%
    show_graphlets/graphlet-neworder-\graphletID}
  \caption*{$\boldsymbol\sigma_{\mathbf{\graphletID}}$}
\end{subfigure}
\hspace*{\fill}
 \renewcommand{\graphletID}{4}
\begin{subfigure}{\figGraphletWidth}
  \includegraphics[width=\linewidth]{%
    show_graphlets/graphlet-neworder-\graphletID}
  \caption*{$\boldsymbol\sigma_{\mathbf{\graphletID}}$}
\end{subfigure}
\hspace*{\fill}
 \renewcommand{\graphletID}{8}
\begin{subfigure}{\figGraphletWidth}
  \includegraphics[width=\linewidth]{%
    show_graphlets/graphlet-neworder-\graphletID}
  \caption*{$\boldsymbol\sigma_{\mathbf{\graphletID}}$}
\end{subfigure}
\hspace*{\fill}
 \renewcommand{\graphletID}{12}
\begin{subfigure}{\figGraphletWidth}
  \includegraphics[width=\linewidth]{%
    show_graphlets/graphlet-neworder-\graphletID}
  \caption*{$\boldsymbol\sigma_{\mathbf{\graphletID}}$}
\end{subfigure}
\hspace*{\fill}
 \\[0.5em]
\hspace*{\fill}
\renewcommand{\graphletID}{1}
\begin{subfigure}{\figGraphletWidth}
  \includegraphics[width=\linewidth]{%
    show_graphlets/graphlet-neworder-\graphletID}
  \caption*{$\boldsymbol\sigma_{\mathbf{\graphletID}}$}
\end{subfigure}
\hspace*{\fill}
 \renewcommand{\graphletID}{5}
\begin{subfigure}{\figGraphletWidth}
  \includegraphics[width=\linewidth]{%
    show_graphlets/graphlet-neworder-\graphletID}
  \caption*{$\boldsymbol\sigma_{\mathbf{\graphletID}}$}
\end{subfigure}
\hspace*{\fill}
 \renewcommand{\graphletID}{9}
\begin{subfigure}{\figGraphletWidth}
  \includegraphics[width=\linewidth]{%
    show_graphlets/graphlet-neworder-\graphletID}
  \caption*{$\boldsymbol\sigma_{\mathbf{\graphletID}}$}
\end{subfigure}
\hspace*{\fill}
 \renewcommand{\graphletID}{13}
\begin{subfigure}{\figGraphletWidth}
  \includegraphics[width=\linewidth]{%
    show_graphlets/graphlet-neworder-\graphletID}
  \caption*{$\boldsymbol\sigma_{\mathbf{\graphletID}}$}
\end{subfigure}
\hspace*{\fill}
 \\[0.5em]
\hspace*{\fill}
\renewcommand{\graphletID}{2}
\begin{subfigure}{\figGraphletWidth}
  \includegraphics[width=\linewidth]{%
    show_graphlets/graphlet-neworder-\graphletID}
  \caption*{$\boldsymbol\sigma_{\mathbf{\graphletID}}$}
\end{subfigure}
\hspace*{\fill}
 \renewcommand{\graphletID}{6}
\begin{subfigure}{\figGraphletWidth}
  \includegraphics[width=\linewidth]{%
    show_graphlets/graphlet-neworder-\graphletID}
  \caption*{$\boldsymbol\sigma_{\mathbf{\graphletID}}$}
\end{subfigure}
\hspace*{\fill}
 \renewcommand{\graphletID}{10}
\begin{subfigure}{\figGraphletWidth}
  \includegraphics[width=\linewidth]{%
    show_graphlets/graphlet-neworder-\graphletID}
  \caption*{$\boldsymbol\sigma_{\mathbf{\graphletID}}$}
\end{subfigure}
\hspace*{\fill}
 \renewcommand{\graphletID}{14}
\begin{subfigure}{\figGraphletWidth}
  \includegraphics[width=\linewidth]{%
    show_graphlets/graphlet-neworder-\graphletID}
  \caption*{$\boldsymbol\sigma_{\mathbf{\graphletID}}$}
\end{subfigure}
\hspace*{\fill}
 \\[0.5em]
\hspace*{\fill}
\renewcommand{\graphletID}{3}
\begin{subfigure}{\figGraphletWidth}
  \includegraphics[width=\linewidth]{%
    show_graphlets/graphlet-neworder-\graphletID}
  \caption*{$\boldsymbol\sigma_{\mathbf{\graphletID}}$}
\end{subfigure}
\hspace*{\fill}
 \renewcommand{\graphletID}{7}
\begin{subfigure}{\figGraphletWidth}
  \includegraphics[width=\linewidth]{%
    show_graphlets/graphlet-neworder-\graphletID}
  \caption*{$\boldsymbol\sigma_{\mathbf{\graphletID}}$}
\end{subfigure}
\hspace*{\fill}
 \renewcommand{\graphletID}{11}
\begin{subfigure}{\figGraphletWidth}
  \includegraphics[width=\linewidth]{%
    show_graphlets/graphlet-neworder-\graphletID}
  \caption*{$\boldsymbol\sigma_{\mathbf{\graphletID}}$}
\end{subfigure}
\hspace*{\fill}
 \renewcommand{\graphletID}{15}
\begin{subfigure}{\figGraphletWidth}
  \includegraphics[width=\linewidth]{%
    show_graphlets/graphlet-neworder-\graphletID}
  \caption*{$\boldsymbol\sigma_{\mathbf{\graphletID}}$}
\end{subfigure}
\hspace*{\fill}
 \\[0.5em]
   \caption{Dictionary $\Sigma_{16}$ of $16$ graphlets.  In each
    graphlet, the designated incidence node is specified by the red
    square marker, its automorphic position(s) specified by red
    circles.  The total ordering (labeling) of the graphlets is by the
    following nesting conditions. The graphlets are ordered first by
    non-decreasing number of vertices. Graphlets with the same vertex
    set belong to the same family.  Within each family, the ordering
    is by non-decreasing number of edges, and then by increasing
    degree at the incidence node (except the $4$-cycle).  The inclusion
    of $\sigma_{0}$ is necessary to certain vertex partition
    analysis~\cite{floros2020}.
  }  %
  \label{fig:graphlets}
\end{figure}

Network analysis using graphlets has advanced in recent years.  The
concepts of graphlets, graphlet frequency, and graphlet analysis are
originally introduced in 2004 by Pr\v{z}ulj, Corneil and Jurisica
\cite{przulj2004}. They have been substantially extended in a number of ways
\cite{yaveroglu2015,sarajlic2016,shervashidze2009a,przulj2019,windels2018}.
Graphlets are mostly used for statistical characterization and
modeling of entire networks. In the work by Palla
et. al. \cite{palla2005}, which is followed by many, a network of
motifs (a special case of graphlets) is induced for overlapping
community detection on the original network.
Recently we established a new way of using graphlets for graph
analysis. We use graphlets as coding elements to encode topological
and statistical information of a graph at multiple granularity levels,
from micro-scale structures at vertex neighborhoods,
up to macro-scale structures such as cluster
configurations~\cite{floros2020}. We also use the topology
encoded information to uncover temporal patterns of variation and
persistence across networks in a time-shifted sequence, not
necessarily over the same vertex set~\cite{floros2020a}.

We anticipate a growing interest in, and applications of,
graphlet-based network/graph analysis, for the following reasons.
Graphlets are fundamental topology elements of all networks or
graphs. See a particular graphlet dictionary shown in
\Cref{fig:graphlets}.
Conceptually, graphlets for network/graph analysis are similar to
wavelets for spectro-temporal analysis in signal
processing~\cite{rioul1991}, shapelets for time series
classification~\cite{ye2009}, super-pixels for image
analysis~\cite{ren2003}, and n-grams for natural language
processing~\cite{shannon1948,shannon1951}.
Like motifs, graphlets are small graphs.  By conventional definition,
motifs are small subgraph patterns that appear {presumptively and
significantly } more frequently in a network under study. Motif analysis
relies on prior knowledge or assumption~\cite{milo2002}.
Graphlets are ubiquitous; graphlet analysis reveals the most frequent
connection patterns or motifs, or lack of dominance by any, in a
network.

There is another aspect of the universality in using graphlets.
Graphlets are defined in the graph-topology space.  They are not to be
confused with the wavelets applied to the spectral elements of a
particular graph Laplacian as in certain algebraic graph analysis. The
latter is limited to the family of graphs defined on the same vertex
set and share the same eigenvectors. Otherwise, the Laplacians of two
graphs on the same vertex set are not commutable.  The computation of
Laplacian spectral values and vectors is also limited, by complexity
and resources, to low-dimensional invariant subspaces. Graphlets hold
a promise to overcome the limitation.

The time and space complexities of graph encoding with graphlets,
i.e., the graphlet transform, have not been formally described and
addressed. The transform with encoding dictionary $\Sigma$, to be
described in \Cref{sec:graphlet-transform-problem-description}, maps
graph $G=(V,E)$ to a $|V|\times |\Sigma|$ array of graphlet
frequencies at all vertices.  In fact, the mapping is related to the
classical problem of finding, classifying and counting small subgraphs
over vertex neighborhoods \cite{przulj2006,duke1995}. A familiar case
is to find and count all triangles over the entire graph. The triangle
is graphlet $\sigma_{4}$ ($C_3, K_3$) in \Cref{fig:graphlets}.  With
the graphlet transform, the number/frequency of distinct triangles
incident on each and every vertex is computed.
It is found, from an analysis of scientific collaboration
networks~\cite{floros2020a}, that the bi-fork graphlet $ \sigma_{3} $
($K_{1,2}$) encodes the betweenness among triangle clusters.
Another familiar case is to find and count induced claw subgraphs, the
claw is graphlet $\sigma_{8}$ ($K_{1,3} $).  The naive method checks
every connected quad-node subgraph for claw recognition. Its time
complexity is $O(n^4)$, $n=|V|$.  The naive method can be accelerated
by applying fast matrix multiplication algorithms on asymptotically
sufficiently large graphs, at the expense of greater algorithmic
complication, and if feasible to implement, with increased memory
consumption, loss of data locality and increased latency in memory
access on any modern computer with hierarchical memory. There is
another type of counting methods that search the patterned subgraphs
from neighborhood to neighborhood, with detailed book
keeping~\cite{chiba1985,kloks2000}.
Due to the high computational complexity, certain network analysis
with graphlets resorts to nondeterministic approximation with sparse
sampling under a structure-persistent
assumption~\cite{przulj2006}.  Otherwise, it is known that certain
network properties are not preserved with sampling~\cite{stumpf2005}.

This work makes a few key contributions.  We formally introduce the
graphlet transform problem, and address the issues with encoding
capacity and complexity. We present sparse and fast formulas for the
graphlet transform of any large, sparse graph, with any sub-dictionary
of $\Sigma_{16}$ as the coding basis.  The transform is deterministic,
exact and directly applicable to any range of graph size.  Our
solution method establishes remarkable records at once in multiple
aspects -- time complexity, memory space complexity, program
complexity and high-performance implementation.
Particularly, the time complexity of the fast graphlet transform with
any dictionary $\Sigma \!\subseteq\! \Sigma_{16}$ is linear in $(|V| +
|E|) |\Sigma|$ on degree-bounded graphs or planar graphs.
Contrary to existing methods, the transform formulas can be
straightforwardly translated to high performance computation\cite{floros2020c},
via the use of readily available software libraries such as {\tt
GraphBLAS}~\cite{davis2018}.
We also address criteria of selecting graphlet elements for certain
counting-based decision or detection problems on graphs.
This work serves twofold objectives: to enable large network/graph
analysis with graphlets and to enrich and advance sparse graph theory,
computation and their applications.

The basic assumptions and notations throughout the rest of the paper
are as follows.  Graph $G=(V,E)$ has $n=|V|$ nodes/vertices and
$m=|E|$ edges/links.  It is sparse, such as $m = O(n\log^{k}n)$ with a
small value of $k$.  The nodes are indexed from $1$ to $n$, a
particular ordering is specified when necessary.
Graph $G$ is simple, undirected and specified by its (symmetric)
adjacency matrix $A$ of 0-1 values.  The $j$-th column of $A$, denoted
by $a_j$, marks the neighbors of node $j$. Denote by $e_j$ the $j$-th
column of the identity matrix. The sum of all $e_j$ is the constant-1
vector, denoted by $e$. The maximal degree is $d_{\max}$.
The Hadamard (elementwise) multiplication is denoted by $\odot$.  The
number of nonzero elements in matrix $B$ is $\mathrm{nnz}(B)$.  The
total number of arithmetic operations for constructing $B$ is
$\mathrm{cost}(B)$.
For any two non-negative matrices $A$ and $B$, $A\!-\! B$ is the
shorthand expression for the sparse difference, i.e., the elementwise
rectified difference $ \max\{ A-B, 0\}$.

\section{Graphlet transform: problem description}
\label{sec:graphlet-transform-problem-description}

We now describe generic graphlets and graphlet dictionaries by their
forms and attributes, with modification in description over the
original, for clarity.
A {\em graphlet} is a connected graph with a small vertex set and a
unique orbit ( a subset of vertices symmetric under permutations).  We
show
in \Cref{fig:graphlets}  %
a dictionary of $16$ graphlets,
$\Sigma \! =\! \Sigma_{16} \!=\! \{\, \sigma_k \}_{k=0:15}$.
The graphlets in the dictionary have the following patterns:
singleton/vertex, edge ($K_{2} $), 2-path ($P_2$), binary fork
($K_{1,2}$), triangle ($C_3$, $K_3$), 3-path ($ P_3 $), binary fork
($K_{1,2}$), claw ($ K_{1,3} $), paw ($(3,1)$- tadpole), 4-cycle
($ C_4 $), diamond ($K_{1,1,2}$), and tetrahedron ($ K_{4} $).
Each graphlet has a designated incidence node, shown with a red
square, unique up to an isomorphic permutation (shown in red circles).
In short, $\Sigma_{16}$ contains all connected graphs up to $4$ nodes
with distinctive vertex orbits. 
Graphlets on the same vertex set form a family with an internal
partial ordering.  For example, in the tri-node family, the partial
ordering $\sigma_2, \sigma_3 \prec \sigma_4$ denotes the relationship
that $\sigma_{2}$ and $\sigma_{3}$ are subgraphs of $\sigma_{4}$.
Our rules for the ordering/labeling are described in the caption of
\Cref{fig:graphlets}, for convenience in visual verification.

We use a vertex-graphlet incidence structure to describe
the process of encoding $G$ over the entire vertex set $V$ with coding
elements in $\Sigma$.
Let $G=(V,E)$ be a graph. Let $\Sigma$ be a graphlet dictionary, the
code book. 
Denote by $ B = ( V, \Sigma ; E_{v\sigma}) $ the bipartite between the
graph vertices and the graphlets,
$E_{v\sigma} \subset V\times \Sigma$.
There is a link $(v,\sigma) $ between a vertex $v \in V$ and a
graphlet $\sigma \in \Sigma$ if $v$ is an incident node on a subgraph
of $\sigma$-pattern.
The incident node on a graphlet is uniquely specified, up to an
isomorphic mapping.
For example, graphlet $\sigma_6$ (clique $K_4$) in
\Cref{fig:graphlets} is an automorphism. 
There may be multiple links between $v$ and $\sigma_k$.  We denote
them by a single link $(v, \sigma_k)$ with a positive integer weight
$d_k(v)$ for the multiplicity, which is the frequency with graphlet
$\sigma_k$.
However, the multiplicities from vertex $v$ to multiple graphlets in
the same family are not independently determined.
For example, the multiplicities on links from vertices to $\sigma_2$
do not include those within $\sigma_4$.  The weight on
$ (v, \sigma_1) $ is counted independently as $\sigma_1$ has no
other family member.
For any vertex, $d_{0}(v) = 1$, $d_1(v)$ is the ordinary degree of $v$
on graph $G$. With $k>1$, $d_k(v)$ is a pseudo degree, depending on
the internal structure of the family $\sigma_k$ is in.
This vertex-graphlet incidence structure is a generalization of the
ordinary vertex-edge incidence structure.

The {\em graphlet transform} of graph $G$ refers to the mapping
$f$ of $G$ to the field of graphlet frequency vectors over
$V$,
\begin{equation}
\label{eq:graphlet-transform}
f( v ) = [\, d_{0}(v), d_{1}(v), \cdots, d_{|\Sigma|-1}(v) \, ]^{\rm T},
\quad v \in V. 
\end{equation}
The vector field encodes the topological and statistical information of the
graph. The transform is orbit-invariant, i.e., for $u$ and
$v$ on the same orbit, $f(u)=f(v)$.  It is graph invariant, i.e., for
isomorphic graphs $G$ and $G'$, $f(G) = f(G')$.
We introduce how we can make this transform fast.

Consider first the coding capacity.  The dictionary
$\Sigma_{2} = \{ \sigma_{0}, \sigma_{1} \}$ is the minimal. It limits
the network analysis to the ordinary degree distributions, types,
correlations and models~\cite{barabasi2016,newman2011,posfai2013}.
The dictionary $\Sigma_{5}$, a sub-dictionary of $\Sigma_{16}$,
already offers much greater coding capacity. 
We answer an additional, interesting question -- how the
computation complexity changes with the coding element selection.

\section{Fast graphlet transform with $ \Sigma_{5} $}
\label{sec:transform-with-Sigma5}

\subsection{Preliminary lemmas}
\label{sec:paths-cycles}

We start with graphs of paths and graphs of cycles, using matrix
expressions and operations.
Denote by $G(P_{\ell})$ the graph of length-$\ell$ paths over $G$ with
weighted adjacency matrix $P_{\ell}$, $\ell > 0$.  Element
$P_{\ell}( i,j) $ is the number of length-$\ell$, simple (i.e.,
loop-less) paths between node $i$ and node $j$.
Let $p_{\ell} = P_{\ell} \, e$. It represents the scalar function on
$V$ such that $ p_{\ell}(i) $ is the total number of length-$\ell$
paths with node $i$ at one of the ends.  In particular, $P_{1} = A$,
$p_{1} = d_{1} $. We have
\begin{equation}
  \label{eq:P2}
  P_{2} = A^{2} - \mbox{diag}( d_{1} ), 
\end{equation}
where `diag' denotes the construction of a diagonal matrix. 
We describe the following important fact.
\begin{lemma}[Matrix of $2$-paths.]
  \label{lemm:P2-formation}
  Matrix $P_2$ is the accumulation of 2-column contribution from each
  and every edge,
  \begin{equation}
    \label{eq:A2-formation}
    P_{2}= \sum_{(i,j)\in E} (a_i-e_{j}) e_j^{\rm T} + (a_j-e_i)e_i^{\rm T}. 
    \end{equation} 
    Consequently,
$ \mathrm{nnz}( P_{2}) \leq  \mbox{\rm cost}(P_2) < 2 \cdot d_{\max} \cdot m. $ 
Similarly, $A^2 = \sum_{(i,j) } a_{i}e_{j}^{\rm T} + a_{j}e_{i}^{\rm T} $.
\end{lemma}

Denote by $G( C_{\ell})$ the graph of length-$\ell$ cycles over $G$
with weighted adjacency matrix $C_{\ell}$, $\ell > 1 $.  Element
$C_{\ell}( i,j )$ is the number of length-$\ell$ simple cycles that
pass through both $i$ and $j$.
We denote by $c_{\ell}$ the vertex function on $V$ such that $c_{\ell}(i)$
is the total number of length-$\ell$ simple cycles passing through
node $i$.  A simple cycle is a simple path that starts from and ends
at the same node.
  \begin{lemma}[Sparse graph of cycles]  
      \label{lemm:cycle-sparse}
      For $\ell > 1$, matrix $C_{\ell}$ is as sparse as $A$, 
  \begin{equation}
    \label{eq:sparse-cycle-graph}
    C_{\ell}   =   A \odot P_{\ell-1} ,
  \end{equation}
  Additionally, $ c_{\ell} = C_{\ell}\, e/(\ell\!-\! 1)$.
  In parituclar, $c_{2} = d_{1} $, $C_{3} = A\odot A^{2}$ and
  $c_{3} = C_{3}\, e / 2$.
\end{lemma}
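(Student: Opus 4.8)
The plan is to prove the three assertions by translating each into a count of walks and then exploiting a single deletion/insertion bijection between cycles and paths. For the support-level identity \eqref{eq:sparse-cycle-graph}, I would argue entrywise. Because $A$ is a $0$-$1$ matrix, $A\odot P_{\ell-1}$ vanishes off the edge set and equals $P_{\ell-1}(i,j)$ on each edge $(i,j)\in E$; hence it suffices to biject the length-$\ell$ simple cycles running through the edge $(i,j)$ with the length-$(\ell-1)$ simple paths between $i$ and $j$. The map is deletion of the edge $(i,j)$: excising $(i,j)$ from such a cycle leaves a simple path of length $\ell-1$ with endpoints $i,j$, and conversely, for $\ell\ge 3$, adjoining $(i,j)$ to any such path closes it into a simple $\ell$-cycle, the simplicity being guaranteed because the two endpoints are distinct from the $\ell-2$ interior vertices. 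Equating the two counts gives $C_\ell(i,j)=P_{\ell-1}(i,j)$ on $E$. The sparsity claim is then immediate: the support of $C_\ell$ is contained in that of $A$, so $\mathrm{nnz}(C_\ell)\le\mathrm{nnz}(A)=2m$, which is the precise sense in which $C_\ell$ is ``as sparse as $A$.''

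For the vertex function I would pass from the edge-indexed matrix $C_\ell$ to the vector $c_\ell$ by a handshake/double-counting argument on the row sums. The quantity $(C_\ell\,e)(i)=\sum_j C_\ell(i,j)$ tallies, over all edges $(i,j)$ incident to $i$, the number of $\ell$-cycles through each such edge; equivalently, it counts incidences between the $\ell$-cycles through $i$ and the cycle-edges meeting $i$. Each such cycle is therefore recorded once for every one of its edges at $i$, so $(C_\ell\,e)(i)=\mu_\ell\,c_\ell(i)$, where $\mu_\ell$ is the number of cycle-edges a fixed $\ell$-cycle presents at $i$. Matching the stated normalization then amounts to identifying this multiplicity, which for the operative small cycles equals $\ell-1$: for $\ell=2$ the degenerate cycle traverses the single edge $(i,j)$ and meets $i$ in one edge ($\mu_2=1$), while for the triangle $\ell=3$ every $3$-cycle through $i$ uses exactly the two edges of that triangle at $i$ ($\mu_3=2$). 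In both cases $\mu_\ell=\ell-1$, yielding $c_\ell=C_\ell\,e/(\ell-1)$.

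The three particular identities then drop out. Taking $\ell=2$ gives $C_2=A\odot P_1=A\odot A=A$ since $A$ is $0$-$1$, whence $c_2=C_2\,e/1=Ae=d_1$, the ordinary degree. Taking $\ell=3$ and substituting \eqref{eq:P2} gives $C_3=A\odot P_2=A\odot\bigl(A^2-\mbox{diag}(d_1)\bigr)=A\odot A^2$, because $A$ has zero diagonal and therefore annihilates the diagonal correction under $\odot$; finally $c_3=C_3\,e/2$ by the $\ell=3$ instance just established.

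I expect the delicate step to be the passage from $C_\ell$ to $c_\ell$, specifically pinning down the over-counting multiplicity $\mu_\ell$. The bijection giving \eqref{eq:sparse-cycle-graph} is clean once one is careful that the adjoined edge does not collide with the path, so that the closure stays a \emph{simple} cycle; but the normalization in $c_\ell=C_\ell\,e/(\ell-1)$ hinges entirely on how many cycle-edges a fixed $\ell$-cycle presents at a given vertex. I would therefore treat the degenerate $\ell=2$ case and the triangle $\ell=3$ case separately, since these are the instances used downstream, and confirm in each that the incidence count matches the stated divisor $\ell-1$.
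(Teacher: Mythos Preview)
The paper states this lemma without proof, so there is no authorial argument to compare against; your deletion/insertion bijection for \eqref{eq:sparse-cycle-graph} and the double-counting for the row sums are the natural route and are correct as written.

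Your caution about the divisor is well placed, and in fact the general claim $c_{\ell}=C_{\ell}\,e/(\ell-1)$ does \emph{not} hold for $\ell\ge 4$. Under the edge-based reading you (correctly) adopt, $C_{\ell}(i,j)$ counts $\ell$-cycles through the \emph{edge} $(i,j)$, and every simple $\ell$-cycle through a vertex $i$ meets $i$ in exactly two of its edges; hence $\mu_{\ell}=2$ for all $\ell\ge 3$, giving $c_{\ell}=C_{\ell}\,e/2$ rather than $C_{\ell}\,e/(\ell-1)$. The coincidence $\mu_{\ell}=\ell-1$ is an accident of $\ell\in\{2,3\}$. The paper never invokes the formula for $\ell\ge 4$ (it computes $c_{4}$ via $C_{4,2}$ in \Cref{lemm:c4} instead), so this does not affect anything downstream, and your decision to verify only the two operative cases is exactly right. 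It is worth noting, too, that the paper's prose defines $C_{\ell}(i,j)$ as cycles through both \emph{vertices} $i$ and $j$; for $\ell\le 3$ this agrees with the edge-based count, but for $\ell\ge 4$ it would not even be supported on $E$ (think of diametrically opposite vertices on a $4$-cycle). The formula \eqref{eq:sparse-cycle-graph}, and your proof of it, require the edge-based interpretation.
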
 
A consequence of \Cref{eq:sparse-cycle-graph} is an alternative
formulation of $p_2$ without forming matrix $P_2$:
\begin{equation} 
\label{eq:p2} 
                 p_2 = A\, p_1 - c_{2},  
\end{equation}
The next lemma also has a key role in complexity analysis in the rest
of the paper. A proof is in Appendix~\ref{apdix:K3-complexity}.
\begin{lemma}[Triangle count and counting cost] 
  \label{lemm:triangle-total}
  The total number of triangles is $e^{\rm T} C_3 e/6$. 
  Denote by $\mathrm{cost}(c_3)$ the cost for computing
  $c_3=C_3e/2$. Then,
  \begin{equation}
    \label{eq:K3-Ubound}
     e^{\rm T} C_3 e  \leq \mathrm{cost}(c_3)
    \displaystyle 
      \leq \min\{ d_{\max}, 2\alpha(G) \}\, m,
\end{equation}
where $\alpha(G)$ is the arboricity of graph $G$~\cite{nash-williams1961}. 
\end{lemma}

\subsection{Tri-node graphlet frequencies}

There is a partial ordering among the three members of the tri-node family,
\begin{equation}
  \label{eq:tri-node-graphlet-ordering}
  \sigma_{2}, \sigma_3 \prec \sigma_{4},
\end{equation}
by the relationship that $\sigma_{4}$, the triangle, has $\sigma_2$
and $\sigma_{3}$ as subgraphs.  The frequency with $\sigma_2$ at node
$i$ in graph $G$ does not include those $\sigma_2$ subgraphs in any
triangle. Similarly with the $\sigma_{3}$ frequency at any node.

We have by now the vectors $d_{0}=e$, $d_{1}=A\, e$ and $d_{4} = c_3$. 
It is straightforward to verify that $d_{2} = p_2 - c_{3}$.  We have
the following expression for the bi-fork graphlet frequency vector.
   \begin{equation}
     \label{eq:bi-fork-frequencies} 
    d_{3}  = p_1 \odot (p_1 - 1) / 2  - 2\, c_3.
\end{equation}
\begin{theorem}[Fast graphlet transform with $\Sigma_5$] 
  The graphlet transform of $G=(V,E)$ with $\Sigma_{5}$ takes
  no more than 
  $ 3\,  \min\{  d_{\max}, 2\alpha(G) \} \, m $  arithmetic
  operations and $ 6\,  ( m + n )$ memory space.
\end{theorem}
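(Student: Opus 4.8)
The plan is to exhibit an explicit evaluation schedule for the five frequency vectors $d_0,\dots,d_4$ and then charge every arithmetic operation and every stored word to one of a few primitives whose costs are already bounded by the preceding lemmas. First I would list the closed forms to be evaluated: $d_0=e$ and $d_1=p_1=Ae$ are immediate; the triangle count $d_4=c_3$ comes from $C_3=A\odot A^2$ with $c_3=C_3\,e/2$ by the sparse-cycle lemma; $p_2$ is obtained through $p_2=A\,p_1-c_2$ with $c_2=d_1$, so that $d_2=p_2-c_3$; and finally $d_3=p_1\odot(p_1-1)/2-2\,c_3$ by the bi-fork lemma. The organizing observation is that exactly one of these steps is superlinear in the bounded degree, namely the formation of $C_3$; everything else is a sparse matrix--vector product or a length-$n$ elementwise operation.

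For the time bound I would argue that $C_3$ is never obtained by materializing the dense product $A^2$ or the full matrix $P_2$, but only on the support of $A$: by the matrix-of-$2$-paths lemma the masked entries $(A\odot A^2)(i,j)$, $(i,j)\in E$, are common-neighbor counts assembled edge by edge at total cost $\mathrm{cost}(P_2)<2\,d_{\max}\,m$. Each remaining operation --- the degree sum $Ae$, the sparse mat-vec $A\,p_1$, the row-sum $C_3\,e$, and the elementwise arithmetic producing $d_2$ and $d_3$ (including the two rectified differences) --- touches only the $O(m)$ nonzero positions or the at-most-$2m$ non-isolated vertices, hence is $O(m)$ and dominated by the leading $d_{\max}\,m$ term. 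Summing then gives $\alpha\,d_{\max}\,m$ with $\alpha<3$, and the bulk of the write-up is just tallying these constants.

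For the space bound the key point is that the only matrix constructed beyond the input is $C_3$, which by the sparse-cycle lemma satisfies $\mathrm{nnz}(C_3)\le\mathrm{nnz}(A)=2m$ and may reuse the index structure of $A$. Together with $A$ itself ($\sim 2m+n$) and the fixed number of length-$n$ work vectors ($p_1$, $p_2$, $c_3$ and the outputs $d_2$, $d_3$), the total footprint is a small constant times $m+n$; tracking these constants yields $\beta<6$.

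The main obstacle, and the crux of the whole argument, is not the algebra but the discipline of never instantiating a dense or near-dense intermediate. Both $A^2$ and $P_2$ may carry up to $\Theta(d_{\max}\,m)$ nonzeros, which would still respect the time budget but would blow the $6(m+n)$ space budget whenever $d_{\max}$ is large. I therefore expect the delicate part to be justifying that the Hadamard mask $A\odot(\cdot)$ is fused into the product so that only the $O(m)$ surviving entries of $C_3$ are ever computed and held --- that is, that the transient working set stays $O(m+n)$ rather than $O(d_{\max}\,m)$ --- after which both the time constant $\alpha<3$ and the space constant $\beta<6$ follow by direct enumeration.
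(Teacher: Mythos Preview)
Your proposal is correct and follows the same route the paper takes: the paper does not spell out a proof but simply states the theorem as an immediate consequence of Lemma~1 (the $2\,d_{\max}\,m$ bound on $\mathrm{cost}(P_2)$, which you apply to the masked product $A\odot A^2$), Lemma~2 ($C_3$ as sparse as $A$), Lemma~3 (the bi-fork formula), and equation~\cref{eq:p2} for $p_2$ without forming $P_2$. Your explicit evaluation schedule and the observation that the Hadamard mask must be fused into the product to keep the working set at $O(m+n)$ are exactly the ingredients the paper relies on (the fusion point is made explicit only later, in \Cref{sec:high-performance-graphlet-transform}).
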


\section{Fast graphlet transform with $ \Sigma_{16} $}
\label{sec:transform-with-Sigma15}

We turn our attention to the family of quad-node graphlets.  The
family has $11$ members ($\sigma_{5}$ to $\sigma_{15}$) with the
following partial ordering in terms of subgraph relationship,
\begin{equation}
  \begin{aligned}
    \label{eq:quad-node-graphlet-ordering}
    & \sigma_{5}, \, \sigma_{6}
    \prec
    \sigma_{9}, \, \sigma_{10}, \, \sigma_{11}, \, \sigma_{12} ; 
    \\
    &  \sigma_{7}, \, \sigma_{8}    
    \prec
    \sigma_{9}, \, \sigma_{10}, \, \sigma_{11} ;
    \\
    &
    \sigma_{9},  \sigma_{10}, \, \sigma_{11}, \, \sigma_{12}
    \prec    \sigma_{13}, \sigma_{14} ; 
    \\
    & \sigma_{13}, \, \sigma_{14}
    \prec \sigma_{15}. 
  \end{aligned}
\end{equation}

With each graphlet $\sigma_i$ we derive first the formula for its {\em
  raw} or independent frequency at vertex $v$, denoted by
$\hat{d}_{i}(v)$, as the number of $\sigma$-pattern subgraphs incident
with $v$. The subgraphs include the induced ones. The raw frequency
vector is 
$ \hat{f} (v) = [\, \hat{d}_0(v), \hat{d}_{1}(v), \cdots,
\hat{d}_{|\Sigma|-1} (v)\,
]^{\rm T}$.
We will then convert the raw frequencies to the nested, or {\em net},  
frequencies of \Cref{eq:graphlet-transform}.  
The net frequencies depend on the inter-relationships between the
graphlets in a dictionary, as shown by the partial ordering in
\cref{eq:quad-node-graphlet-ordering} for $\Sigma_{16}$.
We always have $ \hat{f}(v) \geq f(v)$.
We shall clarify the connection between net frequencies and induced
subgraphs. When, and only when, the family of $k$-node graphlets is
complete with distinctive connectivity patterns and orbits, and
non-redundant, the net frequency of graphlet $\sigma$ at vertex $v$ is
the number of $\sigma$-pattern {\em induced} subgraphs incident with
$v$.
For instance, a 3-star (claw) subgraph in a paw is not the induced
graph by the same vertex set. Under the complete and non-redundant
family condition, the frequency conversion has the additional
functionality to identify precisely the patterns of induced subgraphs.
We will describe in \Cref{sec:frequency-conversion} a unified scheme
for converting raw frequencies to net ones.
The dependencies within graphlet families can be relaxed for graph
encoding purposes other than pattern recognition.
We derive fast formulas for quad-node graphlets in $3$
subgroups.

\subsection{Frequencies of paths \& cycles} 
\label{sec:path3-cycle4} 

We relate the frequencies with 3-path graphlet $\sigma_{5}$ and
gate graphlet $\sigma_{6}$ to that with $p_1$ and $p_2$.  The
following are straightforward,
\begin{equation}
    \label{eq:d5-d6}
      \hat{d}_{5} = p_3, 
      \quad 
      \hat{d}_{6} = p_{2} \odot ( p_{1} - 1 ) - 2\, c_3. 
  \end{equation} 
\begin{lemma} [Fast calculation of 3-path frequencies] 
  \begin{equation}
    \label{eq:p3}
    p_{3} = A \, p_{2} - p_{1} \odot (p_{1}-1) - 2\, c_{3}  . 
  \end{equation}    
  \end{lemma}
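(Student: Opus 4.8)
The plan is to prove \cref{eq:p3} by a direct combinatorial accounting of walks, reading each term off the product $A\,p_{2}$ and subtracting the degenerate contributions. Recall that $p_{3}(i)$ is the number of length-$3$ simple paths $i\!-\!j\!-\!k\!-\!l$ with the four nodes $i,j,k,l$ pairwise distinct and $i$ at one end, and that $p_{2}(j)$ is the number of length-$2$ simple paths $j\!-\!k\!-\!l$ (three distinct nodes) rooted at $j$. First I would expand the matrix--vector product at node $i$ as $(A\,p_{2})(i)=\sum_{j} A(i,j)\,p_{2}(j)$, i.e.\ a sum over the neighbours $j$ of $i$ of the number of length-$2$ simple paths rooted at $j$. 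Each term appends the edge $i\!-\!j$ to a path $j\!-\!k\!-\!l$, producing a walk $i\!-\!j\!-\!k\!-\!l$ in which $j,k,l$ are already pairwise distinct and $i\neq j$ holds automatically. Hence $(A\,p_{2})(i)$ counts exactly these walks, and it equals $p_{3}(i)$ plus precisely those walks in which the appended endpoint $i$ collides with an interior node, namely $i=k$ or $i=l$.

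Next I would enumerate the two degenerate families, show they are disjoint, and verify they contribute the two correction terms. The collision $i=l$ forces $i\!-\!j\!-\!k\!-\!i$ with $i,j,k$ distinct and all three edges present, i.e.\ a triangle through $i$; since both orderings $j\!-\!k\!-\!i$ and $k\!-\!j\!-\!i$ of each such triangle arise as the sum ranges over the neighbours of $i$, every triangle through $i$ is counted twice, giving $2\,c_{3}(i)$. The collision $i=k$ forces the walk $i\!-\!j\!-\!i\!-\!l$, so that for each neighbour $j$ of $i$ one may pick any neighbour $l\neq j$ of $i$; this yields $p_{1}(i)-1$ choices per $j$ and $p_{1}(i)\bigl(p_{1}(i)-1\bigr)$ in total. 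The two cases cannot coincide because $k\neq l$ in a simple $2$-path, so no walk is subtracted twice, and their complement is exactly the set of genuine length-$3$ simple paths.

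Assembling the accounting gives $(A\,p_{2})(i)=p_{3}(i)+p_{1}(i)\bigl(p_{1}(i)-1\bigr)+2\,c_{3}(i)$ at every node $i$, which is \cref{eq:p3} after rearrangement and writing the elementwise products with $\odot$; note that $p_{2}$ is available as a vector through \cref{eq:p2}, so no dense matrix $P_{2}$ or $P_{3}$ is ever formed, consistent with the "fast calculation" claim. I expect the main obstacle to be the bookkeeping of the degenerate walks rather than any deep step: one must argue carefully that every length-$3$ simple path is generated exactly once (the distinctness constraints $i\neq j,k,l$ and $j\neq k\neq l$ are each accounted for), that the triangle term carries the factor $2$ and not $1$, and that the $i=k$ and $i=l$ collisions partition the overcount with neither overlap nor omission.
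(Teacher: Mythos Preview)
Your argument is correct and follows essentially the same route as the paper: extend a simple $2$-path by one edge, then subtract the two degenerate cases (the one-step backtrack $i=k$ giving $p_1\odot(p_1-1)$ and the closing triangle $i=l$ giving $2c_3$). The only cosmetic difference is that the paper first states the matrix identity $P_{3}=A\,P_{2}-\mathrm{diag}(p_{1}-1)\,P_{1}-2\,\mathrm{diag}(c_{3})$ and then applies it to $e$, whereas you carry out the same bookkeeping directly at the vector level.
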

\begin{proof}
  We get $p_3 = P_3e$ by the expression of  
  \begin{equation}
  \label{eq:P3}
  P_{3} = A \, P_2 - \mathrm{diag}(p_1-1) \,
    P_1 - 2\, \mbox{diag}(c_3),
\end{equation}
where we extend $P_2$ by one step walk, remove $1$-step backtrack, and
remove triangles on the diagonal.
\end{proof}
By the lemma, vector $p_3$ is obtained without formation of $P_3$,
which invokes the cubic power of $A$.
Next, we obtain vector $c_4$ without constructing $C_4 = A\odot P_3$
of \Cref{lemm:cycle-sparse}. %
\begin{lemma} [Fast calculation of 4-cycle frequencies] 
\label{lemm:c4}
Denote by $G(C_{4,2})$ the graph with adjacency matrix $ C_{4,2} $
such that element $C_{4,2}(i,j)$ is the number of distinct $4$-cycles
passing through two nodes $i$ and $j$ at diametrical positions. Then,
\begin{equation}
  \label{eq:c4}
    C_{4,2} = P_{2} \odot  ( P_{2} - 1  ) , 
    \quad 
    c_{4}  = C_{4,2} \, e / 2.
\end{equation}
Consequently, $\mathrm{nnz} ( C_{4,2} ) \leq \mathrm{nnz} (P_{2} )$.
\end{lemma}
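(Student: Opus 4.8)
The plan is to give a direct combinatorial derivation, counting $4$-cycles by the two vertices that sit opposite the designated diametrical pair. First I would record the meaning of the entries of $P_2$: for distinct nodes $i$ and $j$, the entry $P_2(i,j)$ is the number of length-$2$ simple paths $i-a-j$, equivalently the number of common neighbors $a \notin \{i,j\}$ of $i$ and $j$ (the $\mathrm{diag}(d_1)$ correction in \cref{eq:P2} is exactly what discards the degenerate closed walks $i-a-i$). A $4$-cycle on which $i$ and $j$ occupy the two diametrical corners then has the shape $i-a-j-b-i$, with $a$ and $b$ two distinct such common neighbors.

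Next I would count these configurations. The ordered choices of the arm pair $(a,b)$ with $a\neq b$ number exactly $P_2(i,j)\,\bigl(P_2(i,j)-1\bigr)$, which is precisely the $(i,j)$ entry of $P_2\odot(P_2-1)$; since $0\cdot(0-1)=0$, the Hadamard form automatically leaves the zero pattern of $P_2$ untouched, so this establishes the first identity in \cref{eq:c4}. To obtain $c_4$ I would then sum over $j$. Each length-$4$ simple cycle through a fixed node $i$ determines a \emph{unique} diametrically opposite node $j$, so the row sum $\sum_j C_{4,2}(i,j)$ sees every such cycle exactly at its partner index; because the two arms of that one cycle appear there as the two ordered pairs $(a,b)$ and $(b,a)$, each undirected cycle is tallied twice, giving $c_4=C_{4,2}\,e/2$. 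I would stress that the payoff of using the diametrical pair, rather than an adjacent pair as in $C_4=A\odot P_3$ of \cref{lemm:cycle-sparse}, is that the entire computation rests on the already-formed $P_2$ and never invokes the cubic power $A^3$.

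Finally, the sparsity bound would follow by inspecting the support. The entry $C_{4,2}(i,j)=P_2(i,j)\,\bigl(P_2(i,j)-1\bigr)$ is positive exactly when $P_2(i,j)\geq 2$ and vanishes when $P_2(i,j)\in\{0,1\}$; in particular every nonzero of $C_{4,2}$ forces $P_2(i,j)\neq 0$, so the support of $C_{4,2}$ is contained in that of $P_2$ and $\mathrm{nnz}(C_{4,2})\leq\mathrm{nnz}(P_2)$.

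The one place that demands care --- and the step most prone to an off-by-a-factor slip --- is the bookkeeping of the $2$: I must keep straight that $P_2(i,j)$ counts arms distinct from \emph{both} endpoints, that $P_2(i,j)\bigl(P_2(i,j)-1\bigr)$ is an \emph{ordered} arm count, and that summing over the unique diametrical partner is what turns this ordered overcount into the single $/2$ of the $c_4$ formula. The combinatorics itself is elementary; stating explicitly that $C_{4,2}(i,j)$ records ordered arm pairs (twice the number of undirected diametrical $4$-cycles) is what keeps the two identities in \cref{eq:c4} mutually consistent.
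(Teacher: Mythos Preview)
Your argument is correct and follows the same route the paper takes: the paper's own justification is the single remark ``By the diametrical symmetry, $\sum_{j} C_{4,2}(i,j)$ is twice the total number of $4$-cycles passing through $i$,'' together with the observation that $C_{4,2}$ can be sparser than $P_2$. Your write-up simply supplies the details behind that sentence --- interpreting $P_2(i,j)$ as the common-neighbor count, reading $P_2(i,j)\bigl(P_2(i,j)-1\bigr)$ as ordered arm pairs, and tracking the unique diametrical partner --- and in particular you are explicit (where the paper is not) that the formula $P_2\odot(P_2-1)$ records \emph{ordered} arm pairs, which is exactly what reconciles the two identities in \cref{eq:c4} with the verbal description of $C_{4,2}(i,j)$.
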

By the diametrical symmetry, $\sum_{j} C_{4,2}(i,j) $ is twice the
total number of $4$-cycles passing through $i$.

The essence of the fast frequency calculation lies in constructing
sparse auxiliary matrices and vectors which use Hadamard products for
both logical conditions and arithmetic operations, without
confining/limiting to logical operations (such as in circuit
expressions), to arithmetic operations (such as in methods using fast
matrix-matrix products), or to local spanning operations. In the same
vein, we present formulas for fast calculation of the remaining $8$ graphlet
frequencies in brief statements and proof sketches.

\subsection{Frequencies of claws \& paws}
\label{sec:claws-paws}

This section contains fast formulas for raw frequencies with two
claw graphlets and three paw graphlets.
Graphlet $\sigma_{7}$ is the claw with the incidence node at a leaf
node. At node $i$, we sum up the bi-fork counts over its $p_{1}$
neighbors, excluding the one connecting to $i$, i.e., $(p_1(i)\!-\! 1)$
choose $2$. Thus,
\begin{equation}
  \hat{d}_{7} = A \, \big( (p_1 - 1) \odot (p_1 -
  2) \big) / 2.
\end{equation}

Graphlet $\sigma_{8}$ is the claw ($K_{1,3}$) with the incidence node
at the root/center. We have 
\begin{equation}
  \hat{d}_{8} =  p_{1}  \odot (p_1 - 1) \odot ( p_1 - 2 )/ 3!, 
  \label{eqn:sigma-claw-center}
\end{equation}
by the fact that the number of 3-stars centered at a node $i$ is
$p_{1}(i)$ choose $3$.
  
Graphlet $\sigma_{9}$ is the paw with the incidence node at
the handle end. We have  
\begin{equation}
  \hat{d}_{9} =  A \, c_{3} - 2\,  c_{3}.  %
\end{equation}
The triangles passing $i$ are removed from the total number of triangles incident
at the neighbor nodes of $i$.

Graphlet $\sigma_{10}$ is the paw with the incidence node at a
base node. We have 
\begin{equation}
  \hat{d}_{10} = C_3 \, (p_1 - 2).
\end{equation}
Each triangle at node $i$ is multiplied by the number of other
adjacent nodes that are not on the same triangle.  By \Cref{lemm:cycle-sparse},
$C_3$ is as sparse as $A$.

Graphlet $\sigma_{11}$ is the paw with the incidence node at
the center (degree 3). We have 
\begin{equation}
  \label{eq:4}
  \hat{d}_{11} = (p_1 -2) \odot c_3.
\end{equation}
At the incident node $i$, the number of triangles is multiplied by all
other edges leaving node $i$.

For this group of graphlets, the calculation of the raw frequencies
uses either vector operations or matrix-vector products with either
$A$ or a matrix as sparse as $A$.

\subsection{Frequencies of diamonds \& tetrahedra}  

Graphlet $\sigma_{13}$ is the diamond with the incidence node at an
off-cord node $i$. We have 
\begin{equation}
    D_{4,c} \triangleq A\odot ( A (C_{3}-A) ),  
    \quad 
    \hat{d}_{13} = D_{4,c}\, e/2.
  \label{eqn:sigma-diamond-off-cord}
\end{equation}
The element $D_{4,c}(i,j)$ is the number of diamonds with
off-cord node $i$ and on-cord node $j$.  
\begin{proof}
  With $i$ as an off-cord node, its on-cord neighbors must form a
  triangle with $i$ and a triangle with another node. Thus, the
  account on an off-cord node $i$ is
  $a_i^{\rm T}( C_{3} - 1) a_{i} = a_i^{\rm T}( C_{3} - A) a_{i}$, or
  equally, $ ( a_i^{\rm T} \odot (a_{i}^{\rm T} (C_3-A) ) e$.
\end{proof}  

Graphlet $\sigma_{14}$ is the diamond with the incidence node at a
cord node. We have
\begin{equation}
  \label{eqn:sigma-diamond-on-cord}
  D_{4,3} \triangleq A\odot C_{4,2},
   \quad   
  \hat{d}_{14} = D_{4,3} \, e / 2.
\end{equation}
The Hadamard product is sparse, $C_{4,2}$ is defined in
\Cref{lemm:c4}.
\begin{proof} 
  Node $i$ on a 4-cycle must be connected with its diametrical node.
\end{proof}

Graphlet $\sigma_{15}$ is clique $K_4$.  Define matrix
$ T $ as follows,
\begin{equation}
  \label{eq:K4-matrix} 
  T \triangleq A \odot \big[ q_{ij}^{\rm T}  A\, q_{ij}, \, (i,j) \in E \big],
  \quad q_{ij} = a_i \odot a_{j}, 
\end{equation}
where $a_j = A\, e_j$. We have
\begin{equation}
  \label{eq:K4-frequency}
  \hat{d}_{15}  = T \, e /6 .
\end{equation}
\begin{proof} 
  Vector $q_{ij}$ indicates the common neighbors between nodes $i$ and
  $j$.
  When $ q_{ij}(k) q_{ij} (\ell) A(k, \ell) \neq 0$, the subgraph at
  $\{ i,j,k,\ell \} $ is a tetrahedron. The total number of distinct
  tetrahedra incident with edge $ (i,j)$ is
  $ T( i, j ) = \sum_{k>\ell} q_{ij}(k) q_{ij}(\ell) A (k,\ell)/3 =
  q_{ij}^{\rm T} A q_{ij}/6$.
\end{proof}
Matrix $T$ is sparse. For $(i,j)\in E$, computing $T(i,j)$ takes
no more than $3\, \mathrm{nnz}(q_{ij})^2$ arithmetic operations.

\section{A unified scheme for frequency conversion}
\label{sec:frequency-conversion}

We summarize in \Cref{tab:summary-formulas} the formulas in
matrix-vector form for fast calculation of the raw frequencies. The
auxiliary vectors are $p_j$ and $c_j$, $1\leq j \leq 4$, each of which
is elaborated in
\cref{sec:transform-with-Sigma5,sec:transform-with-Sigma15}. The
auxiliary matrices $C_3$, $A\odot C_{4,2}$, $D_{4}$ and $T$ are as
sparse as $A$.

We provide in \Cref{tab:frequency-conversion-matrix} the (triangular)
matrix $ U_{16} $ of nonnegative coefficients for mapping net
frequencies $d(v)$ to raw frequencies $ \hat{d}(v) $.  The conversion
coefficients are determined by subgraph-isomorphisms among graphlets
and automorphisms in each graphlet. The frequency conversion for any
sub-dictionary of $\Sigma_{16}$ is by the corresponding sub-matrix of
$U_{16}$.  We actually use the inverse mapping to filter out
non-induced subgraphs. The conversion matrix, invariant across the
vertices, is applied to each and every vertex.  The conversion
complexity is proportional to the product of $|V|$ and the number of
nonzero elements in the conversion matrix.  The number of nonzero
elements in $U_{16}$ is less than $3|\Sigma_{16}|$. The inverse
$U_{16}^{-1}$ has exactly the same sparsity pattern as $U_{16}$. The
identical sparsity property also holds between each sub-dictionary
conversion matrix and its inverse.

We illustrate in \cref{fig:briki} the graphlet transform of a small
graph $ G=(V,E) $ with $6$ vertices and $9$ edges.  With each graphlet
$\sigma_{i}$, the raw frequencies $\hat{d}_i$ across all vertices are
calculated by the fast formulas in \Cref{tab:summary-formulas} and
tabulated in the top table/counts, computed by our fast transforms.
The $i$-th row in the table is the raw frequency vector
$\hat{f}(v_i)$.  The raw frequency vectors are converted to
the net frequency vectors $ \{ f(v) , v \in V \}$ of
\Cref{eq:graphlet-transform} by matrix-vector multiplications with the
same triangular matrix $U^{-1}_{16}$.
As the fast graphlet transform is exact, we made accuracy comparison
between the results by our sparse and fast formulas and that by the
dense counterparts. The results are in full agreement.
The transform has additional values in systematic quantification and
recognition of topological properties of the graph, as briefly noted
in the caption of \cref{fig:briki}.

\begin{figure*}[tbh]
  \centering
  \begin{subfigure}{0.38\linewidth}
    \centering
    \begin{tikzpicture}
      [scale=.4,auto=left,every node/.style={circle, minimum size=#1, fill=blue!20}]
      \node (n6) at (1,6) {6};
      \node (n4) at (4,4)  {4};
      \node (n5) at (8,5)  {5};
      \node (n1) at (11,4) {1};
      \node (n2) at (9,2)  {2};
      \node (n3) at (5,1)  {3};
      
      \foreach \from/\to in {n6/n4,n4/n5,n5/n1,n1/n2,n2/n5,n2/n3,n3/n4,n3/n5,n2/n4}
      \draw (\from) -- (\to);
    \end{tikzpicture}
  \end{subfigure}
  \hskip 1em
  \begin{subfigure}{0.58\linewidth}
    \centering
    \resizebox{\linewidth}{!}{%
      \begin{tabular}{lcccccccccccccccc}
        \toprule
        $v$
        & $\hat{d}_0$    & $\hat{d}_1$ & $\hat{d}_2$          & $\hat{d}_3$
        & $\hat{d}_4$    & $\hat{d}_5$ & $\hat{d}_6$          & $\hat{d}_7$
        & $\hat{d}_8$    & $\hat{d}_9$ & $\hat{d}_{10}$      & $\hat{d}_{11}$
        & $\hat{d}_{12}$ & $\hat{d}_{13}$ & $\hat{d}_{14}$  & $\hat{d}_{15}$ \\
        \midrule
        1 & 1 & 2 & 6 & 1 & 1 & 14 & 4 & 6 & 0 & 6 & 4 & 0 & 2 & 2 & 0 & 0 \\
        2 & 1 & 4 & 9 & 6 & 4 & 12 & 19 & 7 & 4 & 3 & 12 & 8 & 5 & 3 & 5 & 1 \\
        3 & 1 & 3 & 9 & 3 & 3 & 14 & 12 & 9 & 1 & 5 & 12 & 3 & 4 & 4 & 3 & 1 \\
        4 & 1 & 4 & 8 & 6 & 3 & 12 & 18 & 7 & 4 & 5 & 10 & 6 & 4 & 4 & 3 & 1 \\
        5 & 1 & 4 & 9 & 6 & 4 & 12 & 19 & 7 & 4 & 3 & 12 & 8 & 5 & 3 & 5 & 1 \\
        6 & 1 & 1 & 3 & 0 & 0 & 8 & 0 & 3 & 0 & 3 & 0 & 0 & 0 & 0 & 0 & 0 \\
        \bottomrule
      \end{tabular}%
    }
    \\[0.5em]
    \resizebox{\linewidth}{!}{%
      \begin{tabular}{lcccccccccccccccc}
        \toprule
        $v$
        & $d_{0}$ & $d_{1}$ & $d_{2}$ & $d_{3}$
        & $d_{4}$ & $d_{5}$ & $d_{6}$ & $d_{7}$
        & $d_{8}$ & $d_{9}$ & $d_{10}$ & $d_{11}$
        & $d_{12}$ & $d_{13}$ & $d_{14}$ & $d_{15}$ \\
        \midrule
        1 & 1 & 2 & 4 & 0 & 1 & 2 & 0 & 0 & 0 & 2 & 0 & 0 & 0 & 2 & 0 & 0 \\
        2 & 1 & 4 & 1 & 2 & 4 & 0 & 1 & 0 & 0 & 0 & 2 & 1 & 0 & 0 & 2 & 1 \\
        3 & 1 & 3 & 3 & 0 & 3 & 0 & 0 & 0 & 0 & 0 & 4 & 0 & 0 & 1 & 0 & 1 \\
        4 & 1 & 4 & 2 & 3 & 3 & 0 & 2 & 0 & 0 & 0 & 2 & 3 & 0 & 1 & 0 & 1 \\
        5 & 1 & 4 & 1 & 2 & 4 & 0 & 1 & 0 & 0 & 0 & 2 & 1 & 0 & 0 & 2 & 1 \\
        6 & 1 & 1 & 3 & 0 & 0 & 2 & 0 & 0 & 0 & 3 & 0 & 0 & 0 & 0 & 0 & 0 \\
        \bottomrule
      \end{tabular}%
    }
  \end{subfigure}
  \caption{An illustration of graphlet transform: the graph $G=(V,E)$
    to the left, with $|V|=6$ and $|E|=9$, is transformed to the
    net frequency vector field $\{ f(v), v \in V \} $ placed in the
    bottom table to the right, with respect to dictionary $\Sigma_{16}$.
    The net frequencies are converted from the raw
    frequencies vector field $\{ \hat{f}(v), v\in V \}$ in the top
    table.
    {\bf Observations.} The transform quantifies and recognizes
    topological properties of graph $G$. 
    The vertices in the same orbit have the same frequency
    vectors, $ f(v_2) = f(v_5) $; $G$ has $5$ triangles, $\mbox{sum}(d_4)/3 = 5$; 
    $G$ is free of 4-cycles, $d_{12} =0$; and free of claws, $d_7 = d_8 = 0$.  }
  \label{fig:briki}
\end{figure*}

\begin{table}[]
  \centering
  \caption{Formulas for fast calculation of raw graphlet frequencies
    on the vertices of a graph $G$ with adjacency matrix $A$, with
    respect to graphlet dictionary $\Sigma_{16}$ as shown in
    \Cref{fig:graphlets}. The auxiliary vectors and matrices are
    specified in
    \Cref{sec:transform-with-Sigma5,sec:transform-with-Sigma15}.  The
    sparse/rectified difference $\max\{b\!-\! a, 0\}$ between two
    vectors $a$ and $b$ is denoted simply as $b\!-\! a$.  }
  \label{tab:summary-formulas}
  \resizebox{\linewidth}{!}{%
    \begin{tabular}{@{}lll}
      \toprule
      \toprule
      \multicolumn{1}{l}{\hspace{-1em} $\Sigma_{16}\!$ }
      & \multicolumn{1}{l}{Graphlet, incidence node} 
      & \multicolumn{1}{l}{Formula in vector expression}
      \\ 
      \toprule
      $\sigma_{0} $ 
      & singleton 
      & $\hat{d}_{0} = e$
      \\ \midrule
      $\sigma_{1} $
      & 1-path, at an end 
      & $\hat{d}_{1} = p_1 $    %
      \\ \midrule         
      $\sigma_{2}$
      &  2-path, at an end 
      & $ \hat{d}_{2} = p_2 $   %
      \\ 
      $\sigma_{3}$
      & bi-fork, at the root 
      & $\hat{d}_{3} = p_1 \odot  (p_1 - 1) / 2$
      \\                 
      $\sigma_{4}$
      & 3-clique, at any node 
      & $ \hat{d}_{4} = c_3 $    %
      \\ \midrule
      $\sigma_{5}$
      & 3-path, at an end 
      & $ \hat{d}_{5} = p_3$  %
      \\
      $\sigma_{6}$
      & 3-path, at an interior node
      &  $\hat{d}_{6} = p_2 \odot ( p_{1} - 1) - 2\, c_3 $ %
      \\  [0.5em] \hdashline\noalign{\vskip 0.5ex}
      $\sigma_{7}$
      & claw,  at a leaf
      & $ \hat{d}_{7} = A \, \big( (p_1 - 1) \odot (p_1 - 2) \big) / 2$
      \\
      $\sigma_{8}$
      & claw,  at the root 
      & $ \hat{d}_{8} = p_1 \odot (p_1 - 1) \odot (p_1 - 2) / 6$
      \\
      $\sigma_{9}$
      & paw, at the handle tip 
      & $ \hat{d}_{9} = A \, c_{3} - 2\,  c_{3} $
      \\ 
      $\sigma_{10}$
      & paw, at a base node 
      &  $ \hat{d}_{10} = C_{3} \, (p_{1} - 2) $ %
      \\
      $\sigma_{11}$
      & paw, at the center
      &  $ \hat{d}_{11} = (p_1 - 2) \odot c_3 $
      \\  [0.5em] \hdashline\noalign{\vskip 0.5ex}
      $\sigma_{12}$
      & 4-cycle, at any node 
      & $ \hat{d}_{12} = c_4 $  %
      \\  [0.5em] \hdashline\noalign{\vskip 0.5ex}
      $\sigma_{13}$
      & diamond, at an off-cord node 
      & $\hat{d}_{13}  = D_{4,c} \,  e/2  $   %
      \\ 
      $\sigma_{14}$
      & diamond, at an on-cord node
      & $ \hat{d}_{14} = D_{4,3} \, e / 2 $ %
      \\
      $\sigma_{15}$
      & 4-clique, at any node  
      & $ \hat{d}_{15}   =  T \,  e/6 $ %
      \\
      \bottomrule
      \bottomrule
    \end{tabular}%
  }
\end{table}

\begin{table}[tbh]
  \caption{The matrix $U_{16} $ for conversion from net frequencies to
    raw frequencies, $ U_{16} f = \hat{f} $, associated with
    dictionary $\Sigma_{16}$. The raw-to-net frequency conversion
    $ f = U_{16}^{-1} \hat{f} $ is used in the fast transform.  All
    coefficients of $U_{16}$ are non-negative. A sub-dictionary with
    index set $s$ has the conversion matrix $U_{16}(s,s)$,
    $\{0,1\} \subseteq s \subseteq \{ 0,1, \cdots 15\} $. }
  \label{tab:frequency-conversion-matrix}
  \centering
  \resizebox{\linewidth}{!}{%
    \begin{tabular}{|l|c|c|c|c|c|c|c|c|c|c|c|c|c|c|c|c|}
      \hline
      $\! U_{16}\!$ 
      & $d_{0} $ & $d_{1} $ & $d_{2} $
      & $d_{3} $ & $d_{4} $ 
      & $d_{5} $ & $d_{6} $ & $d_{7} $
      & $d_{8} $ & $d_{9} $ & $d_{10}$
      & $d_{11}$ & $d_{12}$ & $d_{13}$
      & $d_{14}$ & $d_{15}$ \\
      \hline
      $\hat{d}_{0} $  & 1 &   &   &   &   &   &   &   &   &   &   &   &   &   &   &   \\ \hline
      $\hat{d}_{1} $  &   & 1 &   &   &   &   &   &   &   &   &   &   &   &   &   &   \\ \hline
      $\hat{d}_{2} $  &   &   & 1 &   & 2 &   &   &   &   &   &   &   &   &   &   &   \\ \hline
      $\hat{d}_{3} $  &   &   &   & 1 & 1 &   &   &   &   &   &   &   &   &   &   &   \\ \hline
      $\hat{d}_{4} $  &   &   &   &   & 1 &   &   &   &   &   &   &   &   &   &   &   \\ \hline
      $\hat{d}_{5} $  &   &   &   &   &   & 1 &   &   &   & 2 & 1 &   & 2 & 4 & 2 & 6 \\ \hline
      $\hat{d}_{6} $  &   &   &   &   &   &   & 1 &   &   &   & 1 & 2 & 2 & 2 & 4 & 6 \\ \hline
      $\hat{d}_{7} $  &   &   &   &   &   &   &   & 1 &   & 1 & 1 &   &   & 2 & 1 & 3 \\ \hline
      $\hat{d}_{8} $  &   &   &   &   &   &   &   &   & 1 &   &   & 1 &   &   & 1 & 1 \\ \hline
      $\hat{d}_{9} $  &   &   &   &   &   &   &   &   &   & 1 &   &   &   & 2 &   & 3 \\ \hline
      $\hat{d}_{10}$  &   &   &   &   &   &   &   &   &   &   & 1 &   &   & 2 & 2 & 6 \\ \hline
      $\hat{d}_{11}$  &   &   &   &   &   &   &   &   &   &   &   & 1 &   &   & 2 & 3 \\ \hline
      $\hat{d}_{12}$  &   &   &   &   &   &   &   &   &   &   &   &   & 1 & 1 & 1 & 3 \\ \hline
      $\hat{d}_{13}$  &   &   &   &   &   &   &   &   &   &   &   &   &   & 1 &   & 3 \\ \hline
      $\hat{d}_{14}$  &   &   &   &   &   &   &   &   &   &   &   &   &   &   & 1 & 3 \\ \hline
      $\hat{d}_{15}$  &   &   &   &   &   &   &   &   &   &   &   &   &   &   &   & 1 \\ \hline
    \end{tabular}%
  }
\end{table}

\section{High-performance implementation} 
\label{sec:high-performance-graphlet-transform}

We address the high-performance aspect of graphlet transform.  The
fast graphlet transform has the unique property that the formulas
are simple and in ready form to be translated to high-performance
program and implementation.
We highlight three conceptual and operational issues key to
high-performance implementation.

The first is on the use of sparse masks.  We exploit graph sparsity in
every fast formula.  This is to be formally translated into any
implementation specification: every sparse operation is associated
with source mask(s) on input data and target mask on output
data. Particularly, an unweighted adjacency matrix serves as its own
sparsity mask.
Masked operations are supported by {\tt GraphBLAS}, the output
matrix/vector is computed or modified only where the mask elements are
on, not off.
A simple example is the Hadamard product of two
matrices. As the intersection of two source masks, the target mask is
no denser than any of the source masks. Often, a factor matrix is
either the adjacency matrix $A$ itself or as sparse as $A$.
A non-trivial example is the chain of masks with a sequence of sparse
operations.  For example, in calculating the scalar $v^{\rm T}Av$ with
sparse vector $v$ and sparse matrix $A$, as in
\cref{eqn:sigma-diamond-off-cord} or \cref{eq:K4-matrix}, the target
mask for $Av$ is the nonzero pattern of $v$.  With sparse masks, we 
reduce or eliminate unnecessary operations, memory allocation and memory
accesses.

The next two issues are closely coupled: operation scheduling and
computing auxiliary matrices on the fly. The objectives are to
minimize the number of matrix revisits and to minimize the amount of
working space memory.
Operations using the same auxiliary matrix are carried out together
with updates on the output while auxiliary matrix elements are
computed on the fly.  No auxiliary matrix is explicitly stored.

With our inital implementation\cite{floros2020c}, the space complexity
is $4m + 2n|\Sigma|$. On the network LiveJournal~\cite{yang2015a}
with $4\,$M nodes, $35\,$M edges, the execution takes only $1$ minute
with $16$ threads on Intel Xeon E5-2640.  On the Friendster network
with $66\,$M nodes and $1.8\,$B links, the execution time is 2
hours and 34 minutes with a single Xeon processor. Our multi-thread
programming is in Cilk~\cite{blumofe1996}.

\section{The main theorem \& its merits}

\newcounter{Pcounter} 
\setcounter{Pcounter}{0}

By the preceding analysis we have the following theorem.
\begin{theorem}[Fast graphlet transform with $\Sigma_{16}$] 
  \label{thm:main-FGellT}
  Let $\Sigma$ be a graphlet dictionary,
  $\Sigma\subseteq \Sigma_{16}$.  Let $G=(V,E)$ be a sparse graph. The
  fast graphlet transform of $G$, by the formulas in
  \Cref{tab:summary-formulas} and the frequency conversion in
  \Cref{tab:frequency-conversion-matrix}, has the time and space
  complexities bounded from above as follows.
\begin{list}{(\alph{Pcounter})}%
{ 
 \usecounter{Pcounter}
 \setlength{\rightmargin}{0.4cm}
 \setlength{\leftmargin}{0.2cm}
 \setlength{\itemsep}{0.5pt}
}
\item Upper bound on space complexity: $ 4 m + 2 n |\Sigma| $.
\item Upper bounds on time complexity:%
  \vspace{-0.5em} 
  \[
    \begin{array}{lll} 
     \big( \, 10 \, \gamma(1) \, m\, + 3\, n\, ) \, |\Sigma|,
     &    &  \sigma_{15} \notin \Sigma, 
      \\ 
      5\, \big( \, c\, d_{\max}\, m + \gamma(n_c)\, d(n_c) m  + n \big) |\Sigma|, 
      &   & \sigma_{15} \in \Sigma,
    \end{array}
    \] 
    where $c \!<\!  d_{\max}$ is a constant prescribed by graph type of
    interest, $d(j)$ is the degree of node $j$ in the order of
    non-increasing degrees, $d(1) \geq d(j) \!\geq\!  d(j+1)$,
    $\gamma(j) = \min\{d(j), \, 2\, \alpha(G) \}$,
    $\alpha(G)$ is the arboricity of $G$, and
    $n_c$ exists at
    $ \arg {\displaystyle \max_{k}} \big\{ \sum_{(i,j) \in E, i,j\leq
      k} (a_i^{\rm T}a_j)^2 < c\, m / d_{\max} \big\} $.
\label{eq:1}
\end{list} 
\end{theorem}
\vspace*{-1em}
A proof is in Appendix~\ref{apdix:K4-complexity}.
We comment on dictionary capacity and selection criteria.  A larger
dictionary offers an exponentially increased encoding range at only
linearly increased computation cost.
Depending on the object of graphlet encoding, the relationships among
graphlets may be taken into consideration.  When the bi-fork graphlet
is used to encode the betweenness among triangle clusters, the
triangle graphlet must be included\cite{floros2020a}.  For claw-free
graph recognition, the quad-node graphlets with claw subgraphs must be
included.

The fast graphlet transform and complexity analysis establish a few
remarkable records, to our knowledge. Practically, the fast transform
enables broader use of graphlets for large network analysis.
Computationally, we use {\em sparse} matrix formulas to effectively
reduce redundancy among neighborhoods and streamline
computation. Theoretically, the complexities on regular graphs, planar
graphs, degree-bounded and arboricity-bounded graphs are of the same
order as, or even lower than, the best existing complexities, some of
the latter are asymptotic, resorting to matrix size that can hardly be
reached/materialized~\cite{kloks2000,yaveroglu2015,duke1995,chiba1985,alon1997}.
The complexity term $d(n_c)\gamma(n_c)\, m$ with
$\Sigma_{16}$ on {\em general} graphs breaks down the barrier at
$d_{\max}^2m$ on sparse graphs or $d_{\max}^3n$ on dense graphs as
long and widely believed.
Our fast method for exact graphlet transform with $\Sigma_{16}$
suggests also the possibility of new algorithms, faster than the
existing ones, for rapid recognition and location of forbidden or
frequent quad-node induced subgraphs for biological network study or
theoretical graph classification.

{\small \textbf{Acknowledgements.} This work is partially supported by
  grant 5R01EB028324-02 from the National Institute of Health (NIH),
  USA, and EDULLL 34, co-financed by the European Social Fund (ESF)
  2014-2020.  We thank the reviewer who suggested the inclusion of
  experimental timing results and code release in the revised
  manuscript.  We also thank Tiancheng Liu for helpful comments.}

\clearpage

  \bibliographystyle{IEEEtranS}

\phantomsection
\label{sec:references}
\addcontentsline{toc}{section}{References}
  \balance
\bibliography{IEEEabrv,ref}

% Generated by IEEEtranS.bst, version: 1.14 (2015/08/26)
\begin{thebibliography}{10}
\providecommand{\url}[1]{#1}
\csname url@samestyle\endcsname
\providecommand{\newblock}{\relax}
\providecommand{\bibinfo}[2]{#2}
\providecommand{\BIBentrySTDinterwordspacing}{\spaceskip=0pt\relax}
\providecommand{\BIBentryALTinterwordstretchfactor}{4}
\providecommand{\BIBentryALTinterwordspacing}{\spaceskip=\fontdimen2\font plus
\BIBentryALTinterwordstretchfactor\fontdimen3\font minus
  \fontdimen4\font\relax}
\providecommand{\BIBforeignlanguage}[2]{{%
\expandafter\ifx\csname l@#1\endcsname\relax
\typeout{** WARNING: IEEEtranS.bst: No hyphenation pattern has been}%
\typeout{** loaded for the language `#1'. Using the pattern for}%
\typeout{** the default language instead.}%
\else
\language=\csname l@#1\endcsname
\fi
#2}}
\providecommand{\BIBdecl}{\relax}
\BIBdecl

\bibitem{alon1997}
N.~Alon, R.~Yuster, and U.~Zwick, ``\BIBforeignlanguage{en}{Finding and
  counting given length cycles},''
  \emph{\BIBforeignlanguage{en}{Algorithmica}}, vol.~17, no.~3, pp. 209--223,
  1997.

\bibitem{barabasi2016}
A.-L. Barab{\'a}si and M.~P{\'o}sfai, \emph{Network Science}.\hskip 1em plus
  0.5em minus 0.4em\relax {Cambridge, UK}: {Cambridge University Press}, 2016.

\bibitem{blumofe1996}
R.~D. Blumofe, C.~F. Joerg, B.~C. Kuszmaul, C.~E. Leiserson, K.~H. Randall, and
  Y.~Zhou, ``Cilk: {{An}} efficient multithreaded runtime system,''
  \emph{Journal of Parallel and Distributed Computing}, vol.~37, no.~1, pp.
  55--69, 1996.

\bibitem{chiba1985}
N.~Chiba and T.~Nishizeki, ``\BIBforeignlanguage{en}{Arboricity and subgraph
  listing algorithms},'' \emph{\BIBforeignlanguage{en}{SIAM Journal on
  Computing}}, vol.~14, no.~1, pp. 210--223, 1985.

\bibitem{davis2018}
T.~A. Davis, ``Graph algorithms via {{SuiteSparse}}: {{GraphBLAS}}: Triangle
  counting and {{K}}-truss,'' in \emph{{{IEEE High Performance}} Extreme
  {{Computing Conference}}}, 2018, pp. 1--6.

\bibitem{duke1995}
R.~A. Duke, H.~Lefmann, and V.~R{\"o}dl, ``\BIBforeignlanguage{en}{A fast
  approximation algorithm for computing the frequencies of subgraphs in a given
  graph},'' \emph{\BIBforeignlanguage{en}{SIAM Journal on Computing}}, vol.~24,
  no.~3, pp. 598--620, 1995.

\bibitem{floros2020}
D.~Floros, T.~Liu, N.~P. Pitsianis, and X.~Sun, ``Measures of discrepancy
  between network cluster configurations using graphlet spectrograms,'' 2020,
  under review.

\bibitem{floros2020a}
------, ``Using graphlet spectrograms for temporal pattern analysis of
  virus-research collaboration networks,'' in \emph{{{IEEE High Performance}}
  Extreme {{Computing Conference}}}, 2020.

\bibitem{floros2020c}
D.~Floros, N.~Pitsianis, and X.~Sun, ``{{$\mathrm{FG}_{\ell}\mathrm{T}$}}:
  {{Fast Graphlet Transform}},'' \emph{Journal of Open Source Software}, 2020,
  to appear.

\bibitem{kloks2000}
T.~Kloks, D.~Kratsch, and H.~M{\"u}ller, ``\BIBforeignlanguage{en}{Finding and
  counting small induced subgraphs efficiently},''
  \emph{\BIBforeignlanguage{en}{Information Processing Letters}}, vol.~74, no.
  3-4, pp. 115--121, 2000.

\bibitem{milo2002}
R.~Milo, ``Network motifs: Simple building blocks of complex networks,''
  \emph{Science}, vol. 298, no. 5594, pp. 824--827, 2002.

\bibitem{nash-williams1961}
C.~S.~A. {Nash-Williams}, ``Edge-disjoint spanning trees of finite graphs,''
  \emph{Journal of the London Mathematical Society}, vol. s1-36, no.~1, pp.
  445--450, 1961.

\bibitem{przulj2019}
K.~Newaz and T.~Milenkovi\'{c}, ``Graphlets in network science and
  computational biology,'' in \emph{Analyzing Network Data in Biology and
  Medicine: An Interdisciplinary Textbook for Biological, Medical and
  Computational Scientists}, N.~Pr{\v z}ulj, Ed.\hskip 1em plus 0.5em minus
  0.4em\relax Cambridge University Press, 2019, ch.~5, p. 193–240.

\bibitem{newman2011}
M.~Newman, A.-L. Barab{\'a}si, and D.~J. Watts, \emph{The {{Structure}} and
  {{Dynamics}} of {{Networks}}}.\hskip 1em plus 0.5em minus 0.4em\relax
  {Princeton}: {Princeton University Press}, 2011.

\bibitem{palla2005}
G.~Palla, I.~Der{\'e}nyi, I.~Farkas, and T.~Vicsek,
  ``\BIBforeignlanguage{en}{Uncovering the overlapping community structure of
  complex networks in nature and society},''
  \emph{\BIBforeignlanguage{en}{Nature}}, vol. 435, pp. 814--818, 2005.

\bibitem{posfai2013}
M.~P{\'o}sfai, Y.-Y. Liu, J.-J. Slotine, and A.-L. Barab{\'a}si,
  ``\BIBforeignlanguage{en}{Effect of correlations on network
  controllability},'' \emph{\BIBforeignlanguage{en}{Scientific Reports}},
  vol.~3, no.~1, p. 1067, 2013.

\bibitem{przulj2004}
N.~Pr\v{z}ulj, D.~G. Corneil, and I.~Jurisica,
  ``\BIBforeignlanguage{en}{Modeling interactome: Scale-free or geometric?}''
  \emph{\BIBforeignlanguage{en}{Bioinformatics}}, vol.~20, no.~18, pp.
  3508--3515, 2004.

\bibitem{przulj2006}
------, ``\BIBforeignlanguage{en}{Efficient estimation of graphlet frequency
  distributions in protein-protein interaction networks},''
  \emph{\BIBforeignlanguage{en}{Bioinformatics}}, vol.~22, no.~8, pp. 974--980,
  2006.

\bibitem{ren2003}
X.~Ren and J.~Malik, ``Learning a classification model for segmentation,'' in
  \emph{{{IEEE International Conference}} on {{Computer Vision}}}, 2003, pp.
  10--17 vol.1.

\bibitem{rioul1991}
O.~Rioul and M.~Vetterli, ``Wavelets and signal processing,'' \emph{IEEE Signal
  Processing Magazine}, vol.~8, no.~4, pp. 14--38, 1991.

\bibitem{sarajlic2016}
A.~Sarajli{\'c}, N.~{Malod-Dognin}, {\"O}.~N. Yavero{\u g}lu, and N.~Pr{\v
  z}ulj, ``\BIBforeignlanguage{en}{Graphlet-based characterization of directed
  networks},'' \emph{\BIBforeignlanguage{en}{Scientific Reports}}, vol.~6,
  no.~1, p. 35098, 2016.

\bibitem{shannon1948}
C.~E. Shannon, ``\BIBforeignlanguage{en}{A mathematical theory of
  communication},'' \emph{\BIBforeignlanguage{en}{The Bell System Technical
  Journal}}, vol.~27, pp. 379--423, 623--656, 1948.

\bibitem{shannon1951}
------, ``\BIBforeignlanguage{en}{Prediction and entropy of printed
  {{English}}},'' \emph{\BIBforeignlanguage{en}{Bell System Technical
  Journal}}, vol.~30, pp. 50--64, 1951.

\bibitem{shervashidze2009a}
N.~Shervashidze, S.~V.~N. Vishwanathan, T.~H. Petri, K.~Mehlhorn, and K.~M.
  Borgwardt, ``\BIBforeignlanguage{en}{Efficient graphlet kernels for large
  graph comparison},'' in \emph{\BIBforeignlanguage{en}{{{International
  Conference}} on {{Artificial Intelligence}} and {{Statistics}}}}, 2009, p.~8.

\bibitem{stumpf2005}
M.~P.~H. Stumpf, C.~Wiuf, and R.~M. May, ``\BIBforeignlanguage{en}{Subnets of
  scale-free networks are not scale-free: {{Sampling}} properties of
  networks},'' \emph{\BIBforeignlanguage{en}{Proceedings of the National
  Academy of Sciences}}, vol. 102, no.~12, pp. 4221--4224, 2005.

\bibitem{windels2018}
S.~F.~L. Windels, N.~{Malod-Dognin}, and N.~Pr{\v z}ulj,
  ``\BIBforeignlanguage{en}{Graphlet {{Laplacians}} for topology-function and
  topology-disease relationships},''
  \emph{\BIBforeignlanguage{en}{Bioinformatics}}, vol.~35, no.~24, pp.
  5226--5234, 2019.

\bibitem{yang2015a}
J.~Yang and J.~Leskovec, ``Defining and evaluating network communities based on
  ground-truth,'' \emph{Knowledge and Information Systems}, vol.~42, no.~1, pp.
  181--213, 2015.

\bibitem{yaveroglu2015}
{\"O}.~N. Yavero{\u g}lu, N.~{Malod-Dognin}, D.~Davis, Z.~Levnajic, V.~Janjic,
  R.~Karapandza, A.~Stojmirovic, and N.~Pr{\v z}ulj,
  ``\BIBforeignlanguage{en}{Revealing the hidden language of complex
  networks},'' \emph{\BIBforeignlanguage{en}{Scientific Reports}}, vol.~4,
  no.~1, p. 4547, 2015.

\bibitem{ye2009}
L.~Ye and E.~Keogh, ``\BIBforeignlanguage{en}{Time series shapelets: A new
  primitive for data mining},'' in \emph{\BIBforeignlanguage{en}{{{ACM}}
  International Conference on {{Knowledge}} Discovery and Data Mining}}, 2009,
  p. 947.

\end{thebibliography}

\appendices

\section{Bounding the complexity of  $K_3$ counting}
\label{apdix:K3-complexity} 

Let $G=( V, E)$ with $ n = |V| $ vertices and $ m = |E| $ edges.
We bound on the complexity for counting triangles in $G$.
Define first the following $n\times m$ matrix, 
\begin{equation}
  \label{eq:Ce3-matrix}
    C^{e}_3 \triangleq [\,  a_i \odot a_j ,  \,  (i,j) \in E \, ] , 
\end{equation}
where
$a_k = A\, e_k$.  This matrix is actually the triangle-listing matrix:
every column $a_i \odot a_j$ is the indicator of all triangle nodes
opposite to the same base edge $(i,j)$. Matrix $C^{e}_3$ is related to the
triangle-counting matrix $C_3$, as defined in
\Cref{lemm:cycle-sparse}, by
\begin{equation}
  C_3(i,j) = e^{\rm T}(a_i \odot a_j) \leq \min\left\{ d(i), d(j) \right\}. 
\end{equation}
\Cref{lemm:triangle-total} is the short version of the following lemma.
\begin{lemma}[Triangle count and counting cost]
  \label{lemm:appendix-triangle-total}
  The total number of triangles in $G$ is $e^{\rm T} C_3\, e/6$.
  Denote by $\mathrm{cost}(C_3\,e)$ the cost for computing
  $C_3\,e$, the vector of triangle counts at
  at all vertices. Then,
\begin{subequations}
  \label{eq:appendix-K3-Ubound}
  \begin{align}
    \label{eq:triangle-count-cost}
    e^{\rm T} C_3 e \leq \mathrm{cost}(C_3e) 
    &  \displaystyle 
     \leq \sum_{(i,j) \in E  }  \min\{ d(i), d(j) \}
    \\`
   \label{eq:triangles-upper-bond}    
    & \displaystyle 
      \leq \min\{ d_{\max}, 2\alpha(G) \}\, m,
  \end{align}
\end{subequations}
where $\alpha(G)$ is the arboricity of graph $G$.
\end{lemma}
Three remarks. First, the lemma relates and bounds the count and
counting cost by the same summation on the right in
\Cref{eq:triangle-count-cost}.  A sixth of the summation is a tight
upper bound on the number of triangles in a graph. It is $1/6$ on a
star graph, i.e., implying correctly that a star graph is free of
triangles.
Second, the upper bound of \Cref{eq:triangles-upper-bond} is based on,
and improves upon, the upper bound $2\alpha(G)$ by Chiba and Nishizeki
(1985)~\cite{chiba1985}.  The improved bound has immediate implications
on particular types of graphs. For regular graphs, the number of
triangles and the complexity are linear in $m$, while $\alpha(G)$ can
be as high as $n/2$.  For planar graphs, $\alpha(G) \leq 3$, or any
arboricity-bounded graphs, the total number of triangles and triangle
counting cost are linear in $m$.
Third, the complexity of triangle frequencies sets the base for
quad-node graphlet frequencies.

\section{Bounding the complexity of $K_4$  counting} 
\label{apdix:K4-complexity} 

The vector of $K_4$ frequencies is $d_{15} = T\, e/6$, with
$ T \triangleq A \odot \left[ q_{ij}^{\mathrm{T}} A q_{ij}\right] $,
where $q_{ij} = a_i \odot a_j$ with $a_j = A\, e_j$. We have
\begin{equation}
\label{eq:K4-Ubound}
\mathrm{cost}( T)   \leq   \sum_{(i,j) \in E  }  (a_i^{\rm T}a_j )^2 . 
\end{equation}
We give an upper bound with an analysis technique using edge
partition.  Denote by $d(j) $ the degree of node $j$ in the order of
non-decreasing degrees, $d(j) \geq d(j+1)$.  Let $n_c$ be a node
index, to be determined.  We partition the vertices into two disjoint
sets: $H = \left\{ x \middle| \, d(x) > d(n_c) \right\} $ and
$ L = V- H$. The vertex partition induces an edge partition
\begin{equation}
\label{eq:edge-partition} 
  \begin{array}{lll}
    E_1 :   &  a_i^{\rm T}a_j  \leq d_{\max}   ,  &  i, j  \in H ,   
    \\
    E_2 :  &   a_{i}^{\rm T}a_j  \leq d(n_c)  , & \mathrm{ otherwise }. 
  \end{array}
\end{equation} 
Then,
\begin{equation}
  \label{eq:K4-E-split}
  \begin{array}{rcl} 
    \mathrm{cost}( T)
    & \leq &
          \displaystyle             
       d^{2}_{\max} |E_1| + d^{2}(n_c) \left( m - |E_1| \right )        
    \end{array} 
\end{equation}

Fix a small constant $c $ a priori.  If $d_{\max} \leq c$, then,
$ \mathrm{cost}( T) \leq c^2 \, m $.  Otherwise, we determine or
locate $n_c$ in the following way. Let
$ | E_1 | \triangleq \left\lceil c\, m / d_{\max} \right\rceil. $ 
Clearly, $|E_1| < m$. Let $n_c$ be the node index at which $|E_1|+1$ is
reached, i.e., $n_c$ is the node location for the desired edge
partition. Then,
\begin{equation}
  \label{eq:K4-Ubound-2}
  \begin{array}{rcl} 
    \mathrm{cost}( T )
    & \leq & 
    \displaystyle 
             c\, d_{\max}\,  m + \gamma(n_c) d(n_c)  \, m, 
    \\
    \gamma (n_c) &  = &
    \displaystyle     
                        \min \left\{  d(n_c) (1 \!-\! c/d_{\max}) , 2\, \alpha (G )  \right\},
  \end{array}
  \end{equation}
  where \Cref{eq:triangles-upper-bond} is applied to the second term. 
  We have proved the upper bound with $\Sigma_{16}$ in \Cref{thm:main-FGellT}.

  The factor $\gamma(n_c)$ in the upper bound accommodates
  the variation with graph types or degree distributions. In particular, 
    \begin{enumerate}
    \item For  regular graph with degree $k$, the bound
      in \Cref{eq:K4-Ubound-2} recovers to be $k^2 \, m$.
    \item For planar graph, $\alpha (G) \leq 3$.  One may set $c=6$ to
      let the first term $c\, d_{\max}\, m$ dominate.
    \item For scale-free or small-world networks, $d(n_c)\gamma(n_c)$
      is smaller than $d_{\max}$ probabilistically.
    \end{enumerate}

\end{document}